\documentclass{article}
\usepackage[totalwidth=13.0cm,totalheight=20.0cm]{geometry}

\usepackage{latexsym,amsthm,amsmath,amssymb,url}

\newcommand{\2}{\vspace{3mm}}

\newtheorem{lemma}{Lemma}

\newtheorem{theorem}{Theorem}

\newcommand{\ep}{\varepsilon}
\newtheorem{remark}{Remark}

\begin{document}

\title{Parameterized Complexity of Satisfying Almost All Linear Equations over $\mathbb{F}_2$}

\author{
R. Crowston, G. Gutin, M. Jones\\
{\small Royal Holloway, University of London}\\[-3pt]
{\small Egham, Surrey, TW20 0EX, UK}\\[-3pt]
{\small \url{{robert|gutin|markj}@cs.rhul.ac.uk}} \and  A. Yeo\\{\small University of Johannesburg, South Africa}\\[-3pt]
{\small \url{anders.yeo.work@gmail.com}} 
}
\date{}
\maketitle

\begin{abstract}
The problem MaxLin2 can be stated as follows. We are given a system $S$ of
$m$ equations in variables $x_1,\ldots ,x_n$, where each equation $\sum_{i \in I_j}x_i = b_j$ is assigned a positive integral weight $w_j$ and
$b_j \in \mathbb{F}_2$, $I_j \subseteq \{1,2,\ldots,n\}$ for $j=1,\ldots ,m$.
We are required to find an assignment of values in $\mathbb{F}_2$ to the variables in order to maximize the total weight of the satisfied equations.

Let $W$ be the total weight of all equations in $S$. We consider the following parameterized version of MaxLin2: decide whether there is an assignment satisfying equations of total weight at least $W-k$, where $k$ is a nonnegative parameter. We prove that this parameterized problem is W[1]-hard even if each equation of $S$ has exactly three variables and every variable appears in exactly three equations and, moreover, each weight $w_j$ equals 1  and no two equations have the same left-hand side. We show the tightness of this result by proving that if each equation has at most two variables then the parameterized problem is fixed-parameter tractable. We also prove that if no variable appears in more than two equations then
we can maximize the total weight of satisfied equations in polynomial time.
\end{abstract}

\section{Introduction}\label{sec:i}

While {\sc MaxSat} and its special case {\sc Max-$r$-Sat} have been widely studied in the literature
on algorithms and complexity for many years, {\sc MaxLin2} and its special case {\sc Max-$r$-Lin2} are less well known,
but H\aa stad  \cite{Hastad01} succinctly summarized the importance of these two problems by saying that they are
``in many respects as basic as satisfiability.'' These problems provide important tools for the study of constraint satisfaction problems
such as {\sc MaxSat} and {\sc Max-$r$-Sat} since constraint satisfaction problems can often be reduced to {\sc MaxLin2}
or {\sc Max-$r$-Lin2}, see, e.g., \cite{AloGutKimSzeYeo11,AloGutKri04,CroFellowsEtAl11,Hastad01,KimWil}.
Accordingly, in the last decade, {\sc MaxLin2} and {\sc Max-$r$-Lin2} have attracted significant attention in algorithmics.

The problem \textsc{MaxLin2} can be stated as follows. We are given a system $S$ of
$m$ equations in variables $x_1,\ldots ,x_n$, where each equation $\sum_{i \in I_j}x_i = b_j$ is assigned a positive integral weight $w_j$ and
$b_j \in \mathbb{F}_2$, $I_j \subseteq \{1,2,\ldots,n\}$ for $j=1,\ldots ,m$.
We are required to find an assignment of values in $\mathbb{F}_2$ to the variables in order to maximize the total weight of the satisfied equations.
For a fixed positive integer $r$, \textsc{Max-$r$-Lin2} is the special case of \textsc{MaxLin2} where no equation has more than $r$ variables.

Let $W$ be the total weight of all equations in $S$. If we assign values to the variables randomly and uniformly, the expected (average) weight of satisfied
equation will be $W/2$. (Indeed, each
equations has probability $1/2$ of being satisfied.)
Using the derandomization method of conditional expectations, it is easy to obtain a polynomial deterministic algorithm
for finding an assignment satisfying equations of total weight at least $W/2.$ This is a 2-approximation to \textsc{MaxLin2}.
In his celebrated result, H\aa stad \cite{Hastad01} showed that essentially no better approximation is possible:
unless P=NP, for each $\ep >0$ there is no polynomial time algorithm for
distinguishing instances of \textsc{Max-3-Lin} in which at least $(1-\ep)m$ equations can be simultaneously satisfied from instances
in which less than $(1/2 + \ep)m$ equations can be simultaneously satisfied.

Mahajan {\em et al.} \cite{MahajanRamanSikdar09} initiated the study of parameterized complexity of \textsc{MaxLin2} by asking the parameterized
complexity of the following problem {\sc MaxLin2-AA}\footnote{AA stands for Above Average.}: decide whether there is an assignment satisfying equations of total weight at least $W/2+k$, where $k$ is the parameter. Using a probabilistic approach, Gutin {\em et al.} \cite{GutKimSzeYeo11} proved that {\sc Max-$r$-Lin2-AA} admits a kernel with a quadratic number $O(k^2)$ of variables and equations and, thus, is fixed-parameter tractable. With respect to the number of variables, this result was improved by Crowston {\em et al.} \cite{CroGutJonKimRuz10} to $O(k\log k)$ and by Kim and Williams \cite{KimWil} to $O(k)$. The parameterized complexity of {\sc MaxLin2-AA} was established in a series of two papers \cite{CroGutJonKimRuz10,CroFellowsEtAl11}, where it was proved, using a combination of algorithmic and linear-algebraic techniques, that {\sc MaxLin2-AA} admits a kernel with at most $O(k^2\log k)$ variables\footnote{For the number of equations only an exponential upper bound was obtained and the existence of a polynomial upper bound remains an open problem \cite{CroFellowsEtAl11}.} and, thus, {\sc MaxLin2-AA} is fixed-parameter tractable.

The parameterized complexity of {\sc Max-$r$-Sat-AA} has also been studied. In this problem, given a CNF formula $F$ with $m$ clauses such that 
the number of literals $r_i$ in a clause $i$ is at most $r$, decide whether one can satisfy at least $A+k$ clauses, where $k$ is the parameter and $A=\sum_{i=1}^m  (1-2^{-r_i})$ is the expected (average) number of satisfied clauses when a truth assignment is chosen randomly and uniformly. (As with {\sc MaxLin2-AA}, it takes a polynomial time to find an assignment which satisfies at least $A$ clauses.) Alon {\em et al.} \cite{AloGutKimSzeYeo11} proved that if $r$ is a constant, then {\sc Max-$r$-Sat-AA} is fixed-parameter tractable. Crowston {\em et al.} \cite{CroGutJonRamSau12} showed that  {\sc Max-$r$-Sat-AA} is NP-complete if $k=2$ and $r=\lceil \log n\rceil$, where $n$ is the number of variables in $F$.

A very interesting and useful parameterization of {\sc Max-$r$-Sat} is the one below $m$: decide whether one can satisfy at least $m-k$ clauses, where $k$ is the parameter. For $r\ge 3$, the problem is NP-complete already for $k=0$, but the important case of $r=2$ was proved to be fixed-parameter tractable by
Razgon and O'Sullivan \cite{RazOsu}. The runtime of the algorithm in \cite{RazOsu} was improved by Raman et al. \cite{RamRamSau}, Cygan et al. \cite{CygPilPilWoj}, and most recently by Lokshtanov et al. \cite{LoNaRaRaSa}.

In view of the above-mentioned results on {\sc MaxLin2-AA} and {\sc Max-$r$-Sat}, Arash Rafiey \cite{Rafiey} asked to determine the parameterized complexity of the following problem, which we denote by {\sc MaxLin2-B[$W$]}: decide whether there is an assignment satisfying equations of total weight at least $W-k$, where $k$ is the parameter. In this paper, we prove that {\sc MaxLin2-B[$W$]} is W[1]-hard. This hardness result prompts us to investigate the complexity of {\sc MaxLin2-B[$W$]} in more detail by considering special cases of this problem.

Let {\sc Max-($\leq r$,$\leq s$)-Lin2} ({\sc Max-($= r$,$= s$)-Lin2}, respectively) denote the problem {\sc MaxLin2} restricted to instances, which have at most (exactly, respectively) $r$ variables in each equation and at most (exactly) $s$ appearances of any variable in all equations. In the special case when each equation has weight 1 and there are no two equations with the same left-hand side, {\sc MaxLin2-B[$W$]} will be denoted by {\sc MaxLin2-B[$m$]}. We will prove that {\sc MaxLin2-B[$W$]} remains hard even after significant restrictions are imposed on it, namely, even {\sc Max-($= 3$,$= 3$)-Lin2-B}[$m$] is W[1]-hard.  This is proved in Section \ref{sec:HR}.

No further improvement of this result is possible unless FPT=W[1] as we will prove that {\sc Max-($\leq 2$,*)-Lin2-B}[$W$] is fixed-parameter tractable, where symbol * indicates that no restriction is imposed on the number of appearances of a variable in the equations. Moreover, we will show that the nonparameterized problem {\sc Max-(*,$\leq 2$)-Lin2} is polynomial time solvable, where symbol * indicates that no restriction is imposed on the number of variables in any equation. These two results are shown in Section \ref{sec:AR}.

We complete the paper by a short discussion in Section \ref{sec:last}.

\section{Hardness Results}\label{sec:HR}

% Consider an instance of "Odd Set" with vertices $v_1,v_2,...,v_n$ and
% sets $e_1,e_2,...,e_m$. We want to decide if we can pick at most $k$
% vertices $R$ such that $R$ intersects all sets in an odd number of
% vertices (I hope this is the correct definition?).

In the problem {\sc Odd Set}, given a set 
$V = \{1, 2, \dots, n\}$
% $V = \{v_1, v_2, \dots, v_n\}$,
 distinct sets $e_1, e_2, \dots, e_m \subseteq V$ and a nonnegative integer $k$, we are to decide whether we can pick 
a set $R$ of at most $k$ elements in $V$ such that $R$ intersects all sets $e_i$ in an odd number of elements.
% at most $k$ elements $R$ such that $R$ intersects all sets in an odd number of elements.
 Downey {\em et al.} \cite{DowFelVarWhi98} showed the problem is $W[1]$-hard by a reduction from {\sc Perfect Code}.

We prove that {\sc Max-($= 3$,$= 3$)-Lin2-B}[$m$] is $W[1]$-hard in two parts. First, we give a reduction from {\sc Odd Set} to show {\sc Max-($\le 3$,*)-Lin2-B}[$W$] is $W[1]$-hard. Then, we give a reduction from {\sc Max-($\le 3$,*)-Lin2-B}[$W$] to {\sc Max-($= 3$,$= 3$)-Lin2-B}[$m$].

\begin{lemma}\label{lem:1}
 The problem  {\sc Max-($\le 3$,*)-Lin2-B}[$W$] is $W[1]$-hard.
\end{lemma}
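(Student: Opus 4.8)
The plan is to reduce from {\sc Odd Set}, which is stated above to be $W[1]$-hard. Given an {\sc Odd Set} instance with ground set $V=\{1,\dots,n\}$, sets $e_1,\dots,e_m$ and parameter $k$, I introduce one variable $x_i$ for each $i\in V$, with the intended meaning that $x_i=1$ exactly when $i\in R$. The requirement that $R$ meet $e_j$ in an odd number of elements then becomes the single $\mathbb{F}_2$-equation $\sum_{i\in e_j}x_i=1$, while the size bound $|R|\le k$ will be enforced by counting how many variables are assigned the value $1$. I keep the parameter of the constructed instance equal to $k$, so the only thing to verify beyond correctness is that the construction runs in polynomial time.

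The first task is to respect the arity restriction, since a set $e_j$ may have more than three elements. I replace each equation $\sum_{i\in e_j}x_i=1$ by a chain of equations of at most three variables, using fresh auxiliary variables $y^j_1,y^j_2,\dots$ private to $e_j$: writing $e_j=\{a_1,\dots,a_t\}$, I take $x_{a_1}+x_{a_2}+y^j_1=0$, then $y^j_{l-1}+x_{a_{l+1}}+y^j_l=0$ for the middle terms, and finally $y^j_{t-2}+x_{a_t}=1$ (for $t\le 3$ the original equation already has at most three variables). Summing the chain makes the auxiliary variables cancel, so the chain is simultaneously satisfiable if and only if $\sum_{i\in e_j}x_i=1$; conversely, when this parity holds the $y^j_l$ are forced and every chain equation is satisfied, whereas when $\sum_{i\in e_j}x_i=0$ at least one chain equation of $e_j$ must fail.

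To force all parity constraints I give every chain equation weight $M=k+1$, and for each $i\in V$ I add the weight-$1$ equation $x_i=0$; let $W$ be the resulting total weight. I claim the {\sc Odd Set} instance is a yes-instance if and only if the constructed system admits an assignment of satisfied weight at least $W-k$. For the forward direction, from an odd set $R$ with $|R|\le k$ I set $x_i=1$ exactly when $i\in R$ and choose the auxiliary variables to satisfy all chain equations; the only violated equations are the $x_i=0$ with $i\in R$, of total weight $|R|\le k$. For the converse, if any chain equation is violated the unsatisfied weight is at least $M=k+1>k$, so in any assignment of deficiency at most $k$ every chain equation holds; hence $\sum_{i\in e_j}x_i=1$ for all $j$, the set $R=\{i:x_i=1\}$ is odd, and $|R|$ equals the number of violated weight-$1$ equations, which is at most $k$.

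The construction is plainly polynomial and leaves the parameter unchanged, so it is a valid parameterized reduction and yields $W[1]$-hardness. The step I expect to need the most care is the interaction between the arity gadget and the counting: I must confirm both that the chain enforces the parity of $e_j$ precisely (so that an unsatisfied parity constraint is guaranteed to cost a full weight-$M$ equation, exceeding the budget $k$) and that the auxiliary variables never disturb the count of variables set to $1$. Because the $y$-variables occur in no weight-$1$ equation, the deficiency of any feasible assignment is exactly $|R|$, and this clean bookkeeping is what makes the two directions of the equivalence line up.
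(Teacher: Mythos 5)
Your proof is correct and follows essentially the same reduction as the paper: both reduce from {\sc Odd Set} by adding weight-$1$ equations $x_i=0$ and encoding each parity constraint $\sum_{i\in e_j}x_i=1$ as a chain of arity-$\le 3$ equations with private auxiliary variables of weight $k+1$, so that any feasible assignment must satisfy every chain and the deficiency counts $|R|$. The only difference is a trivial reindexing of the chain gadget (the paper starts with $y_1^i+x_{j_1}=0$, you start with $x_{a_1}+x_{a_2}+y_1^j=0$), which does not affect the argument.
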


\begin{proof}
Consider an instance of {\sc Odd Set} with elements 
$1, 2, \ldots , n$
% $v_1,v_2,\ldots ,v_n$ 
and distinct sets $e_1,e_2,\ldots ,e_m$, with parameter $k$.

Create an instance of {\sc Max-($\le 3$,*)-Lin2-B}[$W$] with parameter $k$ as follows. Start with the variables
$x_1,x_2,\ldots ,x_n$ and equations $x_1=0, x_2=0, \ldots , x_n=0$ (each of weight 1).
For every set  
$e_i = \{ j_1, j_2, j_3, \ldots , j_{n_i} \}$
% $e_i = \{ v_{j_1}, v_{j_2}, v_{j_3}, \ldots , v_{j_{n_i}} \}$
  do the following. Add the variables $y_1^i, \ldots, y_{n_i-1}^i$ and the following set $E_i$ of equations, each of weight $k+1$.
%  (or make $k+1$ copies).
\begin{center}
$\begin{array}{l}
y_1^i + x_{j_1} = 0 \\
y_1^i + y_2^i + x_{j_2} = 0 \\
y_2^i + y_3^i + x_{j_3} = 0 \\
\dots \\
y_{n_i-2}^i + y_{n_i-1}^i + x_{j_{n_i-1}} = 0 \\
y_{n_i-1}^i + x_{j_{n_i}} = 1 \\
\end{array}$
\end{center}

Observe that the number of variables and equations is polynomial in $nm$.
It remains to show that this instance of {\sc Max-($\le 3$,*)-Lin2-B}[$W$] is a {\sc Yes}-instance if and only if the instance of {\sc Odd Set} is a {\sc Yes}-instance.

Suppose first that we can satisfy simultaneously equations of total weight at least $W-k$.
Consider the set $E_i$ of equations. Since the equations have weight $k+1$, they must all be satisfied. By summing them up, we obtain $\sum_{a=1}^{n_i} x_{j_a} = 1$ over $\mathbb{F}_2$. Therefore an odd number of the values of $x_{j_1},x_{j_2},\ldots ,x_{j_{n_i}}$ are 1. Note that at most $k$ equations of the type $x_i=0$ are not satisfied.
The above implies that if $R$ is the set of elements, 
% $v_j$, 
$j$, for which the corresponding variable, $x_j$
is equal to $1$, then the size of $R$ is at most $k$ and for each set $e_i$ the intersection of $R$ and $e_i$ is odd. Therefore $R$ has the desired property.

Conversely, suppose $R$ is a set of at most $k$ elements such that for each set $e_i$ the intersection of $R$ and $e_i$ is odd. Then observe that by setting $x_j=1$ if and only if 
% $v_j$
$j$ is in $R$, 
setting $y_1^i = x_{j_1}$, and setting $y_{r+1}^i = y_r^i + x_{j_{r+1}}$ for $1 \le r < n_i-1$, 
it is possible to satisfy all equations in $E_i$ for every $i$ and thus
to satisfy simultaneously equations of total weight at least $W-k$.
\end{proof}

% Note: We could modify Lemma 1 so that it proves Max-(=3,*)-lin2B[W], but given the current proof of the next theorem, it would be unhelpful...

\begin{theorem}
 The problem {\sc Max-($= 3$,$= 3$)-Lin2-B}[$m$] is $W[1]$-hard.
\end{theorem}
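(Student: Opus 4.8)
The plan is to give a polynomial-time reduction from {\sc Max-($\le 3$,*)-Lin2-B}[$W$], shown $W[1]$-hard in Lemma~\ref{lem:1}, mapping the source parameter $k$ to a new parameter $k'$ that depends only on $k$. Starting from an instance of {\sc Max-($\le 3$,*)-Lin2-B}[$W$] (in fact it suffices to start from the structured instances produced by the proof of Lemma~\ref{lem:1}), I would normalise it by four local transformations so that the final system (i) has unit weights, (ii) has exactly three variables in every equation, (iii) has every variable occurring in exactly three equations, and (iv) has pairwise distinct left-hand sides, all while preserving the answer.

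The first and most important step deals with the weights and with distinctness at the same time. A heavy equation in the source is present only to be forced, i.e.\ its violation must be forbidden. I would replace each such equation $L=\beta$ not by identical copies (which would collide on the left-hand side) but by $k+1$ gadgets that share only the variables of $L$ and otherwise use private fresh variables; each gadget forces $L=\beta$ with deficit at least one, so the gadgets have pairwise distinct left-hand sides, and any assignment with $L\ne\beta$ violates at least $k+1$ of the resulting unit-weight equations. This yields unit weights, distinct left-hand sides, and keeps the effective budget at $k$ for the forced part.

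Next I would make every equation have exactly three variables and every variable occur exactly three times. Equations that are too short are padded with fresh dummy variables drawn from an always-satisfiable, self-contained $3$-regular filler block, so that padding changes neither truth values nor the deficit; variables occurring too few times are topped up by attaching them to the same block. The genuinely delicate operation is reducing the occurrences of the original variables: a variable $x_j$ now occurs about $(k+1)$ times for each set containing $j$, so it must be split into several fresh copies constrained to take a common value. The point of difficulty is that this equality among copies has to be \emph{robust} --- breaking it must cost more than the whole budget $k'$ --- even though every copy is allowed to lie in only three equations.

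I expect this last requirement to be the main obstacle. The degree-$3$ ceiling rules out the textbook device of forcing a constraint by many identical copies, since identical copies would blow up some variable's degree; hence the forcing must be distributed over many fresh, low-degree variables while all left-hand sides stay distinct, and a simple equality cycle is useless because a $3$-regular graph can always be cut at cost at most three. The resolution is a robust constant-degree equality gadget (an expander/LDPC-style parity gadget) whose inconsistent assignments all violate more than $k'$ equations; once such a gadget is in place, the final budget $k'$ is fixed as a function of $k$ by adding the unavoidable filler cost to $k$, and the remaining work is the bookkeeping verifying that a deficit-$k'$ assignment exists precisely when the original instance is a {\sc Yes}-instance, with no cheaper ``cheating'' assignment available.
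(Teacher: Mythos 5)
Your outline matches the paper's high-level plan (reduce from the instances of Lemma~\ref{lem:1}, then normalise weights, equation sizes, variable degrees, and left-hand-side distinctness), and you have correctly located the crux in the degree-$3$ equality gadget. But that crux is exactly the step you do not supply, and the property you demand of the missing gadget --- that \emph{every} inconsistent assignment to the copies of $x$ violates more than the entire budget $k'$ --- is unattainable in a $(=3,=3)$ instance: a copy $x_1$ of $x$ occurs in at most three equations in total, so flipping $x_1$ alone (and leaving every other variable, including the gadget's auxiliary variables, as in the consistent assignment) falsifies at most three equations. Hence no constant-degree ``expander/LDPC'' gadget can make inconsistency cost more than $k'$ once $k'\ge 3$, and the resolution you propose for your ``main obstacle'' cannot work as stated. (Your preliminary step of replacing a weight-$(k+1)$ equation by $k+1$ variable-disjoint forcing gadgets is fine in principle, but note it multiplies the degree of each variable of $L$ by $k+1$, so everything hinges on the unproved splitting step.)

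What is actually needed --- and what the paper proves --- is a much weaker, purely local property: for every inconsistent assignment to the copies, the number of falsified gadget equations is at least the number of original equations one could hope to gain by being inconsistent, so that \emph{some} optimal assignment sets all copies equal. The paper achieves this not with an expander but with a fixed $9$-edge gadget on six copies ($K_{3,3}$: the $6$-cycle plus the three long diagonals), where each gadget equation is taken with multiplicity $\lceil (d(x)-2)/6\rceil$ so that the cut penalty (at least $3$, $4$, or $5$ gadget edges for minority sides of size $1$, $2$, $3$) dominates the degrees $d(x_i)\le\lceil d(x)/6\rceil$ of the minority copies. This only reduces the maximum degree by a constant factor ($\max_i d(x_i)\le 8d(x)/9$), so the rule is \emph{iterated} polynomially many times until all degrees reach $3$; duplicate equations (which the multiplicities reintroduce) are eliminated only at the very end by a separate constant-size gadget, using the invariant that equal left-hand sides always come with equal right-hand sides. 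Without this iteration-plus-exchange argument (or an equivalent substitute), your reduction is incomplete at its central step.
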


\begin{proof}
%  We give a transformation from {\sc Odd Set} to {\sc Max-($= 3$,$= 3$)-Lin2}B[$m$]
% TODO weights, or multiple copies? Arg it's such a mess, but I think we agreed roughly to assume multiple  copies everywhere and get rid of weight talk. So do that. But still I want to think about it a bit....
% TODO none of this ``we add 2 new variables~'' business. Write it as a word!
Observe that in the system obtained in the proof of Lemma \ref{lem:1} no two equations have the same left-hand side.
Consider an instance of {\sc Max-($\le 3$,*)-Lin2-B}[$W$] in which no two equations have the same left-hand side.
Hereafter, we view a single equation of weight $w$ as $w$ identical equations of weight $1$. This means we do have
equations with the same left-hand side (for $k\ge 1$), but note that these equations have also the same right-hand side.

For each of the reductions that follow, we show that the optimal assignment will falsify the same number of equations in the original instance as in the reduced instance. This implies that the original instance is a {\sc Yes} instance if and only if the reduced instance is a {\sc Yes}-instance.

For each variable $x$, let $d(x)$ denote the total number of equations containing $x$.
We first apply the following two reduction rules until $d(x)\le 3$ for every variable $x$.

\textbf{If $d(x)= 4$},  replace $x$ with four new variables, $x_1,x_2,x_3,x_4$. For each equation containing $x$, replace the occurrence of $x$ with one of $x_1,x_2,x_3,x_4$, so that each new variable appears once. Furthermore, add equations $x_1+x_2=0$, $x_2+x_3=0$, $x_3+x_4=0$, $x_4+x_1=0$.

\textbf{If $d(x)\ge 5$}, replace $x$ with six new variables, $x_1,x_2,x_3,x_4,x_5,x_6$. For each equation containing $x$, replace the occurrence of $x$ with one of $x_1,x_2,x_3,x_4,x_5,x_6$, distributing the new $x_i$ evenly among the equations, so that $$\lfloor d(x)/6\rfloor\le d(x_1)\le d(x_2)\le d(x_3)\le d(x_4)\le d(x_5)\le d(x_6) \le \lceil d(x)/6\rceil.$$
Furthermore, add $\lceil (d(x)-2)/6\rceil$ copies of each of the equations
 $x_1+x_2=0$, $x_2+x_3=0$, $x_3+x_4=0$, $x_4+x_5=0$, $x_5+x_6=0$, $x_6+x_1=0$, $x_1+x_4=0$, $x_2+x_5=0$, $x_3+x_6=0$.
%, each of weight $\lceil (d(x)/6)\rceil$.

Observe that each rule replaces a variable with a set of variables, each of which appears in fewer equations than the original variable. Therefore after enough applications, each variable will appear in at most three equations.
To see that only a polynomial number of applications are needed, observe that at each iteration $\max_i d(x_i) \le 8d(x)/9$. Therefore we may view the applications of reduction rules as a branching tree for each variable, where the depth of the tree for a variable $x$ is bounded by $\log_{9/8}d(x)$ and the tree branches at most six ways each time.
%the degree is reduced by a factor of at least $6/7$,
% Each $x_i$ may in turn be subject to a reduction rule, but
%  this may be done in polynomial time.
% a bit more explanation of why the running time doesn't get too large here. But don't into n^{20} type detail.

\smallskip

We now show that each rule is valid.

\smallskip

For the $d(x)=4$ case, suppose that the optimal assignment is one in which $x_1,x_2,x_3,x_4$ are not all the same. Then at least two of the equations $x_1+x_2=0$, $x_2+x_3=0$, $x_3+x_4=0$, $x_4+x_1=0$ will be falsified, but then we can satisfy all of them by falsifying at most two other equations. Hence, there exists an optimal assignment in which $x_1,x_2,x_3,x_4$ all have the same value.
% Therefore by setting $x$ to this value we have an assignment to the original instance that falsifies the same number of equations.

For the $d(x)\ge 5$ case, suppose that the optimal assignment is one in which $x_1,x_2,x_3,x_4,x_5,x_6$ are not all the same. Consider the set of nine equations  $x_1+x_2=0$, $x_2+x_3=0$, $x_3+x_4=0$, $x_4+x_5=0$, $x_5+x_6=0$, $x_6+x_1=0$, $x_1+x_4=0$, $x_2+x_5=0$, $x_3+x_6=0$. If the value of exactly one $x_i$ is different from the values of the rest of the variables, at least three of the nine equations will be falsified. Changing the value of this variable will falsify at most $d(x_6) \le \lceil d(x)/6\rceil \le 3 \lceil (d(x)-2)/6\rceil$ equations. If the values of exactly two variables $x_i$ are different from the values of the rest of the variables, at least four of the nine equations will be falsified. Changing the value of the two variables will falsify at most $d(x_5)+d(x_6)\le 4 \lceil (d(x)-2)/6\rceil$ equations. If three variables $x_i$ are assigned one, and three are assigned zero, at least five of the {\bf nine} equations will be falsified. Assigning zero to all variables $x_i$ will falsify at most $d(x_4)+d(x_5)+d(x_6) \le 5\lceil (d(x)-2)/6\rceil$ equations.
(For example, if $d(x)=8$ then $d(x_4)=1$, $d(x_5)=d(x_6)=2$ and $\lceil (d(x)-2)/6\rceil=5.$)
Therefore, there exists an optimal assignment in which $x_1,x_2,x_3,x_4,x_5,x_6$ all have the same value.

Thus, for each rule the new instance has an optimal assignment in which all the new equations are satisfied, and all the new variables have the same value. By setting $x$ to this value, we have an optimal assignment to the original instance that falsifies the same number of equations.

\2

We now have an instance in which each equation contains at most three variables and each variable appears in at most three equations.
Next, we observe that we may map this to an instance where each equation contains exactly three variables.

First consider equations containing one variable. If an equation is of the
form $x=0$, this may be replaced with equations $a+b+x=0$, $u+v+a=0$,
$u+v+b=0$, where $a,b,u,v$ are new variables. For $x=1$, we replace it
with the equations $a+b+x=1$, $u+v+a=0$, $u+v+b=0$.

For equations containing two variables, if an equation is of the form $x+y=0$, this may be replaced with equations $u+v+x=0$ and $u+v+y=0$, where $u,v$ are new variables. A similar mapping may be done for $x+y=1$ by replacing $y$ with $y+1$.

Observe that for each reduction, 
if an assignment to the original instance satisfies the original equation, it can be extended to one that satisfies all the new equations, and if it does not, an optimal extension will satisfy all but one of the new equations. Thus an optimal assignment to the original instance falsifies the same number of equations as an optimal assignment to the reduced instance.

\2

We now have that every equation contains exactly three variables and each variable is in at most three equations. We now map this to an instance where every variable is in exactly three equations.

If a variable $x$ only appears in one equation, then we may assume that this equation is satisfied, and remove it from the system. Since each equation contains exactly $3$ variables, the number of variables $x$ with $d(x)=2$ must be a multiple of $3$.
Thus, we may partition the variables $x$ with $d(x)=2$ into triplets.

Consider each triplet $x_1,x_2,x_3$ such that $d(x_1)=d(x_2)=d(x_3)=2$. Add variables $z_1,z_2,z_3,u_1,\ldots, u_6$, and equations $x_1+x_2+u_1=0$, $u_1+u_2+z_1=0$, $u_2+u_3+z_1=0$ (two copies), $u_3+u_1+z_2=0$, $x_3+u_4+z_2=0$, $u_4+u_5+z_2=0$, $u_5+u_6+z_3=0$ (two copies), $u_6+u_4+z_3=0$.

Observe that for any assignment to $x_1,x_2,x_3$, it is possible to satisfy all these equations by setting $z_1=z_2=z_3=0$, $u_1=u_2=u_3=x_1+x_2$ and $u_4=u_5=u_6=x_3$.
Thus an optimal assignment to the original instance extends to an optimal assignment to the reduced instance that falsifies the same number of equations.

% In fact can split x_1, then just add
%If $d(x)=2$, add equations $u_1+x=0$, $u_1+u_2=0$, $u_2+u_3=0$ (of weight $2$), $u_3+u_1=0$, and observe that whatever value is assigned to $x$, it is possible to satisfy all of these new equations.
%
%We now have that every variable appears in exactly three equations, but to acheive this we have had to add some new equations which only contain two variables. However, we now have that every equation with two variable will always be satisfied and has the form $x+y=0$ (if this equation has weight two, it is treated as multiple separate equations). One may create 'padding' gadgets formed of variables $a,b,s,t,w$ and equations $a+b+s=0$, $a+b+t=0$, $a+b+w=0$, $s+t+w=0$. One may then add variables $s,t,w$ to any equation containing two variables generated in the previous paragraph. Observe that by assigning $a=b=s=t=w=0$, all these equations may be satisfied, as required. Furthermore, since $\sum_{\text{equations } e} len(e)=0 \text{ (mod 3)}$, one may generate the exact number of padding variables required.

\2

We now have that every equation has exactly three variables and every variable appears in exactly three equations. It remains to show that we can map this to an instance in which these properties hold and no two equations have the same left-hand side. Since we started the proof of this theorem from a system where
every pair of equations with the same left-hand side had the same right-hand side, and since our transformations above have not changed this property, it suffices to get rid of identical equations.

Note that since $d(x)=3$ for every variable $x$, there at most three copies of any given equation in the system.

If there are three copies of the same equation, then none of the variables appearing in that equation appear anywhere else. Therefore we may assume the equation is satisfied, and remove the three copies from the system.

If there are two copies of the equation $x+y+z=0$, replace them with the
following set of six equations:
$x+y+c_1=0$, $a_1+b_1+c_1=0$, $a_1+b_1+z=0$, $x+y+c_2=0$, $a_2+b_2+c_2=0$,
$a_2+b_2+z=0$, $a_1+b_2+c_1=0$, $a_2+b_1+c_2=0$, where
$a_1,b_1,c_1,a_2,b_2,c_2$ are new variables. Observe that if $x+y+z=0$ is satisfied
then by setting $a_1=a_2=x$, $b_1=b_2=y$, $c_1=c_2=z$, we can satisfy all
of these equations. If $x+y+z=0$ is falsified, then the first three equations ($x+y+c_1=0$, $a_1+b_1+c_1=0$, $a_1+b_1+z=0$)
are inconsistent, as are $x+y+c_2=0$, $a_2+b_2+c_2=0$, $a_2+b_2+z=0$, and hence at least two equations of the set of six equations are falsified.
Furthermore, by setting $a_1=a_2=b_1=b_2=c_1=c_2=0$
we can satisfy all but two equations in the set of six equations.
Thus in either case, an optimal assignment to the original instance extends to an optimal assignment to the reduced instance that falsifies the same number of equations.

If there are two equations of the form $x+y+z=1$, do the same as above but
change all the right-hand sides to $1$.
\end{proof}

\section{Algorithmic Results}\label{sec:AR}

In this section, we assume that all weights of the equations in {\sc MaxLin2-B}[$W$] belong to the set $\{1,2,\ldots ,k+1\}$. Indeed, replacing any weight larger than $k+1$ by $k+1$ does not change the answer to {\sc MaxLin2-B}[$W$].

In the {\sc Edge Bipartization} problem, given a graph $G$ and a nonnegative integer $k$, we are to decide whether we can make $G$ bipartite by deleting at most $k$ edges. When $k$ is the parameter, the problem is fixed-parameter tractable and can be solved by an algorithm of running time $O(2^kM^2)$ \cite{GuoGraHufNieWer}, where $M$ is the number of edges in $G$.
We prove that {\sc Max-($\leq 2$,*)-Lin2-B}[$W$] is fixed-parameter tractable by giving a reduction to {\sc Edge Bipartization}.

%We need an algorithm for a natural generalization of {\sc Edge Bipartization}, the {\sc Weighted Edge Bipartization} problem: given an edge-weighted graph $G$ %and a nonnegative integral parameter $k$, we are to decide whether $G$ can be made bipartite by deleting  edges of toal at most.

\begin{theorem}\label{thm:fpt}
The problem {\sc Max-($\leq 2$,*)-Lin2-B}[$W$] can be solved in time $O(2^k(km)^2).$
\end{theorem}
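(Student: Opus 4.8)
The plan is to reduce {\sc Max-($\leq 2$,*)-Lin2-B}[$W$] to {\sc Edge Bipartization} and then invoke the $O(2^kM^2)$ algorithm of \cite{GuoGraHufNieWer}. First I would normalize the weights: following the convention that a weight-$w$ equation is $w$ copies of a weight-$1$ equation, and recalling that all weights lie in $\{1,\ldots,k+1\}$, I replace the system by an equivalent unweighted system with at most $(k+1)m = O(km)$ equations in which we must falsify equations of total weight at most $k$. Next I would eliminate single-variable equations by adding one fresh vertex-variable $x_0$ and rewriting every equation $x_i = b$ as $x_i + x_0 = b$. Now every equation has exactly two variables, and the whole system is invariant under the global flip that negates all variables, since $(x_i+1)+(x_j+1) = x_i+x_j$; this symmetry will let me recover the intended role $x_0 = 0$ at the end.

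The key step is to encode the two-variable system as a graph $G$ whose vertices are the variables. An equation $x_i + x_j = 1$ (a ``$\neq$'' constraint) becomes a single edge $\{x_i,x_j\}$, which is monochromatic under a $2$-colouring exactly when the equation is falsified. An equation $x_i + x_j = 0$ (an ``$=$'' constraint) is instead subdivided: I add a fresh degree-two vertex $m$ and the path $x_i - m - x_j$. In any $2$-colouring, if $x_i = x_j$ then $m$ can be coloured so that both path edges are properly coloured, whereas if $x_i \neq x_j$ then exactly one of the two path edges is monochromatic no matter how $m$ is coloured. Thus, after fixing the colours of the original variables, the minimum number of monochromatic edges contributed by each gadget equals $1$ if the corresponding equation is falsified and $0$ otherwise. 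Summing over all equations, the minimum number of edges whose deletion makes $G$ bipartite equals the minimum weight of equations falsified by an assignment to the two-variable system.

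It then remains to connect this back to the original instance. Given an optimal $2$-colouring of $G$ (equivalently an optimal edge-bipartization), I would read off the assignment, using the flip symmetry to ensure $x_0 = 0$; under $x_0 = 0$ each rewritten equation $x_i + x_0 = b$ agrees with the original $x_i = b$, so the number of falsified equations is preserved. Hence the original instance is a {\sc Yes}-instance if and only if $G$ can be made bipartite by deleting at most $k$ edges. For the running time, each of the $O(km)$ unweighted equations contributes at most two edges, so $G$ has $M = O(km)$ edges, and the algorithm of \cite{GuoGraHufNieWer} runs in time $O(2^kM^2) = O(2^k(km)^2)$.

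The main obstacle I expect is getting the correspondence between falsified equations and deleted edges exactly right rather than merely up to a constant factor, in particular verifying the subdivision gadget for the ``$=$'' constraints and confirming that the fresh midpoint vertices, being free and of degree two, never force additional deletions; together with the bookkeeping that keeps the edge count linear in $km$, this is what makes the claimed running time come out.
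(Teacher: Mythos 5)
Your proof is correct and follows essentially the same route as the paper: both reduce to {\sc Edge Bipartization} and invoke the $O(2^kM^2)$ algorithm on a graph with $M=O(km)$ edges, using the correspondence between $\mathbb{F}_2$-assignments and $2$-colourings under which a falsified equation contributes exactly one monochromatic edge. The differences are only in the gadgets: you handle one-variable equations with a single auxiliary variable $x_0$ and the global flip symmetry, whereas the paper uses two vertices $v',v''$ joined by an edge of weight $k+1$ to force them apart; you turn each ``$=$'' constraint into an even (length-$2$) path with a free midpoint, whereas the paper first rewrites $x_i+x_j=0$ as two ``$\neq$'' equations sharing a fresh variable (the same parity idea); and you unweight by duplication up front, whereas the paper removes weights at the end by replacing a weight-$w$ edge with $w$ internally disjoint length-$3$ paths. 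The one point you leave implicit is that duplicating a weight-$w$ ``$\neq$'' equation produces $w$ parallel edges; either note that {\sc Edge Bipartization} is insensitive to multi-edges or subdivide each copy into an odd-length path as the paper does --- a cosmetic fix that does not affect the $O(2^k(km)^2)$ bound.
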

\begin{proof}
Consider an instance $S$ of {\sc Max-($\leq 2$,*)-Lin2-B}[$W$] with $m$ equations and consider an assignment which minimizes the weight of falsified equations.
Now replace every equation, $x_i+x_j=0$, which contains two variables in the left-hand side and 0 in the right-hand side, by two equations $x_i+y=1$ and $x_j+y=1$, where $y$ is a new variable; both equations have the same weight as $x_i+x_j=0$. If the assignment satisfies $x_i+x_j=0$ then both new equations can be satisfied by extending the assignment with $y=1-x_i$. If
the assignment falsifies $x_i+x_j=0$ then exactly one of the two new equations will be falsified by extending the assignment with $y=0$.
Thus, the replacement preserves the minimum weight of falsified equations and can be used to replace the instance by an equivalent one $S'$ in which every equation with two variables has right-hand side equal 1.

Now assume that all equations of the system with two variables have right-hand side equal 1 and construct the following weighted graph $G$. The vertices of $G$ are the variables of the system plus two extra vertices, $v'$ and $v''$. For each equation $x+y=1$ in $S'$, add to $G$ edge $xy$. For every equation $x=b$ of $S'$, add to $G$ edge $v'x$ if $b=0$ and $xv''$ if $b=1.$ The weight of each edge coincides with the weight of the corresponding equation of $S'$. Finally, add to $G$ edge $v'v''$ of weight $k+1.$

Let $w^*$ be a nonnegative integer such that $w^*\le k$. Observe that there is an assignment which falsifies equations of total weight $w^*$ if and only if there is a set $F$ of edges of $G$ of total weight $w^*$ such that $G-F$ is bipartite. Indeed, consider an assignment  which falsifies equations of total weight $w^*$. Initiate sets $V'$ and $V''$ as follows: $V'=\{v'\}$ and $V''=\{v''\}$. If an equation $u+v=1$ is satisfied and $u=1,\ v=0$, then $u$ is added to $V'$ and $v$ to $V''$, and if an equation $x=b$ is satisfied, then $x$ is added to $V'$ if $b=1$ and to $V''$ if $b=0$. Note that the total weight of edges whose both end-vertices are either in $V'$ or in $V''$ equals $w^*$. Similarly, we can show the other direction.

To get rid of the weights in $G$ we replace each edge $uv$ of $G$ of weight $w$ with $w$ paths $uy_{uv}^{p}z_{uv}^{p}v$, $p\in \{1,\ldots ,w\}$, where $y_{uv}^{p}$ and $z_{uv}^{p}$ are new vertices. This finally reduces the instance of {\sc Max-($\leq 2$,*)-Lin2-B}[$W$] into an instance of {\sc Edge Bipartization} with $O(mk)$ edges. It remains to apply the {\sc Edge Bipartization} algorithm mentioned before the theorem.
\end{proof}

\begin{theorem}\label{them:polytime}
The problem  {\sc Max-(*,$\leq 2$)-Lin2} is polynomial time solvable.
\end{theorem}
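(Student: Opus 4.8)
The plan is to recast the problem as a parity (or $T$-join) problem on a graph built from the variable--equation incidence structure, exploiting the hypothesis that every variable occurs in at most two equations. Build a multigraph $G$ whose vertices are the equations: for each variable occurring in exactly two equations $E_a,E_b$ put an edge joining $a$ and $b$, and call a variable occurring in exactly one equation a \emph{free} variable of that equation (variables occurring in no equation are discarded). An assignment to the degree-$2$ variables is then exactly a choice of a subset $H\subseteq E(G)$ of edges (those set to $1$), and for a vertex $j$ we have $\sum_{i\in I_j}x_i \equiv \deg_H(j)\pmod 2$ up to the contribution of the free variables of $E_j$. Since a free variable sits in no other equation, setting it toggles the parity of its own equation alone and affects nothing else. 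Hence equation $j$ is falsified exactly when $\deg_H(j)\not\equiv b_j$, unless $E_j$ has a free variable, in which case $j$ can always be satisfied for free. Minimizing the weight of falsified equations is therefore the problem of minimizing the weight of the \emph{error set} $\{j:\deg_H(j)\not\equiv b_j\}$ over all choices of $H$ and of the free variables.

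Next I would decompose $G$ into connected components and solve each independently, since no edge (degree-$2$ variable) crosses between components and free variables are local. The key step is a realizability lemma: in a connected component $C$, the sets of vertices that can be made to have ``wrong'' parity are precisely the \emph{even}-size subsets of $C$ when $C$ contains no free variable, and \emph{all} subsets of $C$ when it does. The forward direction uses the handshake identity $\sum_{j}\deg_H(j)\equiv 0 \pmod 2$, which forces the number of odd-degree vertices to be even; the backward direction is the standard spanning-tree/path-pairing argument that realizes any even subset as the odd-degree-vertex set of a subgraph, while a single free variable supplies the extra generator $e_v$ needed to reach odd subsets. Given this, the per-component optimum is immediate. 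Call $C$ \emph{constrained} if it has no free variable and set $p_C=\sum_{j\in C}b_j \bmod 2$. An unconstrained component can have empty error set, contributing cost $0$; a constrained component with $p_C=0$ also contributes $0$; a constrained component with $p_C=1$ must falsify at least one equation (odd parity cannot be made even), and exactly one min-weight equation suffices, giving cost $\min_{j\in C}w_j$. Summing over components, the minimum weight of falsified equations equals $\sum_{C \text{ constrained},\, p_C=1}\min_{j\in C}w_j$, and the desired maximum is $W$ minus this quantity; every step (building $G$, computing components, parities, and minima) is polynomial.

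The main obstacle will be stating and proving the realizability lemma cleanly, in particular handling the free-variable (half-edge) case uniformly and correctly disposing of the degenerate components: single equations that share no variable with any other, equations whose variables are all free, and equations with empty left-hand side (which behave as isolated vertices that are satisfied iff $b_j=0$). These are exactly the cases that fix whether a component is constrained, so they must be folded into the parity bookkeeping rather than treated ad hoc. Once the lemma is in place, the optimality of the greedy per-component choice and the polynomial running time follow directly, and no bound on the weights is needed.
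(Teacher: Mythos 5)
Your proposal is correct and follows essentially the same route as the paper: the same graph on the equations (two equations adjacent when they share a variable), the same decomposition into connected components, and the same per-component conclusion that everything is satisfiable unless the component is ``constrained'' with $\sum_j b_j = 1$ over $\mathbb{F}_2$, in which case exactly one minimum-weight equation must be sacrificed. The only difference is presentational: the paper disposes of singly-occurring variables by a deletion rule and certifies the per-component claim by a rank computation (the coefficient matrix of a component has rank $m''-1$), whereas you keep them as free variables and argue via the handshake lemma plus a spanning-tree realizability argument --- two standard proofs of the same fact.
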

\begin{proof}
Consider an instance of {\sc Max-(*,$\leq 2$)-Lin2} with system $Ax=b$ in which each equation has a weight. We start by applying the following reduction rule as long as possible: if there is a variable which appears only in one equation, delete the equation from the system. Since we can always satisfy an equation with a unique variable, the reduction rule produces a new system $A'x'=b'$ such that the minimum weight of equations falsified by an assignment is the same in $Ax=b$ as in $A'x'=b'$. Now construct a graph $G$ whose vertices correspond to equations in $A'x'=b'$ and a pair of vertices is adjacent if the corresponding equations share a variable.

Consider a connected component $H$ in $G$ and the subsystem $A''x''=b''$ corresponding to $H$. Let $b''=(b''_1,\ldots ,b''_{m''})$, where $m''$ is the number of rows in $A''$. Recall that $A''x''=b''$ is a system over $\mathbb{F}_2$, and thus all summations and ranks of matrices considered below are over $\mathbb{F}_2$.
Since the sum of rows in $A''$ is equal 0 and any subsum of the sum is not equal 0, the rank of $A''$  equals $m''-1$.
Observe that if $\sum_{j=1}^{m''} b''=0$  then the rank of the matrix $[A''b'']$  equals the rank of $A''$ and, thus, there is an assignment which satisfies all equations in $A''x''=b''$. However, if $\sum_{j=1}^{m''} b''=1$, the rank of $[A''b'']$ is $m''$ but the rank of $A''$ is $m''-1$. Hence, the system $A''x''=b''$ is no longer consistent and we can satisfy all equations but one. The falsified equation can be chosen arbitrarily and to maximize the total weight of satisfied equations of $A''x''=b''$ we have to choose an equation of minimum weight.

The above argument leads to a polynomial time algorithm to solve {\sc Max-(*,$\leq 2$)-Lin2}.
\end{proof}

\begin{remark}
{\em We can prove Theorem \ref{them:polytime} using another approach, whose idea we will briefly describe. Consider a pair of equations from the  system $Ax=b$ which share a variable $x_i$ and consider an assignment which minimizes the weight of falsified equations. Let $w^*$ be the smallest weight of the two equations.
At least one of the two equations is satisfied by the assignment and if one of the two equations is falsified, its weight is $w^*$
(as otherwise we could change the value of $x_i$, arriving at a contradiction). Replace the two equations in $Ax=b$ by the equation which is the sum of these two equations and whose weight equals $w^*$. Observe that the replacement does not change the minimum total weight of falsified equations. Theorem \ref{them:polytime} can be proved by repeatedly using this reduction.}
\end{remark}

\section{Discussion}\label{sec:last}

In this paper, we proved that {\sc Max-($= 3$,$= 3$)-Lin2-B}[$m$] is W[1]-hard, but {\sc Max-($\leq 2$,*)-Lin2-B}[$W$] is fixed-parameter tractable and {\sc Max-(*,$\leq 2$)-Lin2} is polynomial time solvable. This gives a boundary between parameterized intractability and tractability for {\sc MaxLin2-B}[$W$].

Recently, Kratsch and Wahlstr{\"o}m \cite{KraWah} proved that {\sc Edge Bipartization} admits a randomized polynomial kernel (for details, see \cite{KraWah}). 
% The proof of Theorem \ref{thm:fpt} implies that {\sc Max-($\leq 2$,*)-Lin2-B}[$W$] admits a randomized polynomial kernel as well.
The reduction given in the proof of Theorem \ref{thm:fpt}, together with a straightforward reduction from {\sc Edge Bipartization} to {\sc Max-($\leq 2$,*)-Lin2-B}[$W$], implies that {\sc Max-($\leq 2$,*)-Lin2-B}[$W$] admits a randomized polynomial kernel as well.


\begin{thebibliography}{10}

\bibitem{AloGutKimSzeYeo11}
N.~Alon, G.~Gutin, E.~J. Kim, S.~Szeider, and A.~Yeo,
Solving {MAX}-$r$-{SAT} above a tight lower bound.
{\em Algorithmica} 61(3): 638--655, 2011.

\bibitem{AloGutKri04} N. Alon, G. Gutin and M. Krivelevich. Algorithms with large
domination ratio, {\em J. Algorithms} 50: 118--131, 2004.

\bibitem{CroFellowsEtAl11}
R.~Crowston, M.~Fellows, G.~Gutin, M.~Jones, F.~Rosamond, S.~Thomass{\'e} and A.~Yeo.
Simultaneously satisfying linear equations over $\mathbb{F}_2$: MaxLin2 and Max-$r$-Lin2 parameterized above average.
{\em Proc. FSTTCS 2011}: 229-240.

\bibitem{CroGutJonKimRuz10}
R.~Crowston, G.~Gutin, M.~Jones, E.~J. Kim, and I.~Ruzsa.
Systems of linear equations over $\mathbb{F}_2$ and problems
parameterized above average. {\em SWAT 2010}, Lect. Notes
Comput. Sci. 6139 (2010),  164--175.

\bibitem{CroGutJonRamSau12} 
R. Crowston, G. Gutin, M. Jones, V. Raman and S. Saurabh, 
Parameterized Complexity of MaxSat Above Average. {\em Proc. LATIN 2012},
Lect. Notes Comput. Sci. 7256 (2012), 184-194.

\bibitem{CygPilPilWoj} M. Cygan, M. Pilipczuk, M. Pilipczuk, and J.O. Wojtaszczyk,
On Multiway Cut parameterized above lower bounds. {\em Proc. IPEC 2011},
Lect. Notes Comput. Sci. 7112 (2011), 1-12.

\bibitem{DowFelVarWhi98}
R.~G. Downey, M.~R. Fellows, A.~Vardy and G.~Whittle.
The parameterized complexity of some fundamental problems in coding theory.
{\em SIAM J. Comput.} 29(2): 545--570, 1999

%\bibitem{DowneyFellows99}
%R.~G. Downey and M.~R. Fellows. Parameterized Complexity, Springer Verlag, 1999.

%\bibitem{FlumGrohe06}
%J.~Flum and M.~Grohe, Parameterized Complexity Theory, Springer Verlag, 2006.

\bibitem{GuoGraHufNieWer} J. Guo, J. Gramm, F. H{\" u}ffner, R. Niedermeier, and S. Wernicke. Compression based
fixed-parameter algorithms for feedback vertex set and edge bipartization.
J. Comput. Syst. Sci., 72(8):1386--1396, 2006.

\bibitem{GutKimSzeYeo11}
G.~Gutin, E.~J. Kim, S.~Szeider, and A.~Yeo.
A probabilistic approach to problems parameterized above or below
tight bounds.  J. Comput. Sys. Sci. 77: 422--429,  2011.


\bibitem{Hastad01} J. H\aa stad, Some optimal inapproximability results. J. ACM 48: 798--859, 2001.

\bibitem{KimWil} E.J. Kim and R. Williams, Improved Parameterized Algorithms for Above Average
Constraint Satisfaction. {\em Proc. IPEC 2011},
Lect. Notes Comput. Sci. 7112 (2011), 118-131.

\bibitem{KraWah} S. Kratsch and M. Wahlstr{\"o}m, Compression via Matroids: A Randomized
Polynomial Kernel for Odd Cycle Transversal. {\em Proc. SODA 2012}: 94-103.


\bibitem{LoNaRaRaSa} D.~Lokshtanov, N.~S.~Narayanaswamy, V.~Raman, M.~S.~Ramanujan, S.~Saurabh,
Faster Parameterized Algorithms using Linear Programming,
Arxiv preprint: \url{http://arxiv.org/abs/1203.0833}

%\bibitem{Mahajan99}
%M. Mahajan and V. Raman, Parameterizing above guaranteed values: MaxSat and MaxCut.
%J. Algorithms, 31(2):335--354, 1999.

\bibitem{MahajanRamanSikdar09}
M.~Mahajan, V.~Raman, and S.~Sikdar,
Parameterizing above or below guaranteed values. J. Comput. Syst. Sci., 75(2):137--153, 2009.

%\bibitem{Niedermeier06}
%R.~Niedermeier. Invitation to Fixed-Parameter Algorithms,
%Oxford University Press, 2006.

\bibitem{Rafiey} Arash Rafiey, Private Communication, Sept. 2011.

\bibitem{RamRamSau} V. Raman, M.S. Ramanujan and S. Saurabh, Paths, Flowers and Vertex Cover. {\em Proc. ESA 2011}, Lect. Notes Comput. Sci. 6942: 382--393, 2011.


\bibitem{RazOsu} I. Razgon and B. O'Sullivan. Almost 2-SAT is fixed-parameter tractable. J. Comput. Syst. Sci. 75(8):435--450, 2009.


\end{thebibliography}
\end{document}